\newtheorem{observation}{Observation}
\providecommand{\alphabetSize}{d}
\newcommand{\D}{d}
\providecommand{\dist}{\delta}
\providecommand{\DIST}{DIST}
\providecommand{\COUNTERS}{\mathbb{W}}
\providecommand{\pb}[1]{{\sc #1} problem}
\providecommand{\implies}{~\Longrightarrow~}
\title{Adaptive Computation  of the \\ Swap-Insert Correction Distance}
\author{
	J\'er\'emy Barbay\inst{1}\thanks{
		Partially supported by Millennium Nucleus Information and Coordination in Networks ICM/FIC RC130003. Part of this work was presented at the conference SPIRE 2015~\cite{2015-SPIRE-AdaptiveComputationOfTheSwapInsertEdutionDistance-BarbayPerez}.
	} 
    \and
    Pablo P\'erez-Lantero\inst{2}$^{\star}$
}
\institute{
	Departamento de Ciencias de la Computaci\'on, Universidad de Chile, Chile
	\and
	Escuela de Ingenier\'ia Civil en Inform\'atica, Universidad de Valpara\'iso, Chile
}
\begin{document}
\maketitle
\begin{abstract}
The Swap-Insert Correction distance from a string $S$ of length $n$ to another string $L$ of length $m\geq n$ on the alphabet $[1..d]$ is the minimum number of insertions, and swaps of pairs of adjacent symbols, converting $S$ into $L$. Contrarily to other correction distances, computing it is NP-Hard in the size $d$ of the alphabet. We describe an algorithm computing this distance in time within $O(d^2 nm g^{d-1})$, where there are $n_\alpha$ occurrences of $\alpha$ in $S$, $m_\alpha$ occurrences of $\alpha$ in $L$, and where $g=\max_{\alpha\in[1..d]} \min\{n_\alpha,m_\alpha-n_\alpha\}$ measures the difficulty of the instance. The difficulty $g$ is bounded by above by various terms, such as the length of the shortest string $S$, and by the maximum number of occurrences of a single character in $S$.  Those results illustrate how, in many cases, the correction distance between two strings can be easier to compute than in the worst case scenario.
\end{abstract}

\keywords{
Adaptive, 
Dynamic Programming,
Edit Distance,
Insert,
Swap.
}

\begin{INUTILE}
ABSTRACT for Submission:

The Swap-Insert String-to-String Correction distance from a string $S$ to another string $L$ on the alphabet $[1..d]$ is the minimum number of insertions and swaps of pairs of adjacent symbols converting $S$ into $L$.  We describe an algorithm computing this distance in time polynomial in the lengths $n$ of $S$ and $m$ of $L$, so that it is as good as previous results, and which additionally takes advantage of ``easy'' instances such that for each symbol $\alpha\in[1..d]$, either $m_\alpha-n_\alpha$ is small or $n_\alpha$ is small, where there are $n_\alpha$ occurrences of $\alpha$ in $S$ and $m_\alpha$ occurrences of $\alpha$ in $L$.

\keywords{
Adaptive
Dynamic Programming
Edit Distance
Insert
Swap
}

\end{INUTILE}

\section{Introduction}\label{sec:intro}

Given two strings $S$ and $L$ on the alphabet $\Sigma=[1..d]$ and a list
of correction operations on strings, the {\sc String-to-String Correction}
distance is the minimum number of operations required to transform the
string~$S$ into the string~$L$. Introduced in 1974 by Wagner and
Fischer~\cite{wagner1974string}, this concept has many applications, from
suggesting corrections for typing mistakes, to decomposing the changes
between two consecutive versions into a minimum number of correction
steps, for example within a control version system\begin{LONG} such as
{\tt cvs}, {\tt svn} or {\tt git}\end{LONG}.

Each distinct set of correction operators yields a distinct correction
distance on strings.  For instance, Wagner and
Fischer~\cite{wagner1974string} showed that for the three following
operations, the \texttt{insertion} of a symbol at some arbitrary position,
the \texttt{deletion} of a symbol at some arbitrary position, and the
\texttt{substitution} of a symbol at some arbitrary position, there is a
dynamic program solving this problem in time within $O(nm)$ when $S$ is of
length $n$ and $L$ of length $m$.  Similar complexity results, all
polynomial, hold for many other different subsets of the natural
correction operators, with one striking exception:
Wagner~\cite{wagner1975complexity} proved the NP-hardness of the {\sc
  Swap-Insert Correction} distance, denoted $\delta(S,L)$ through this
paper, i.e. the correction distance when restricted to the operators {\tt
  insertion} and {\tt swap} (or, by symmetry, to the operators {\tt
  deletion} and {\tt swap}). 

The {\sc Swap-Insert Correction} distance's difficulty attracted special
interest, with two results of importance: Abu-Khzam et
al.~\cite{AbuKhzam201141} described an algorithm computing $\dist(S,L)$ in
time within $O({1.6181}^{\dist(S,L)} m)$, and
Meister~\cite{2015-TCS-UsingSwapsAndDeletesToMakeStringsMatch-Meister}
described an algorithm computing $\dist(S,L)$ in time polynomial in the
input size when $S$ and $L$ are strings on a finite alphabet\begin{LONG}:
its running time is $(m+1)^{2d+1}\cdot (n+1)^2$ times some polynomial
function on $n$ and $m$\end{LONG}.

The complexity of Meister's
result~\cite{2015-TCS-UsingSwapsAndDeletesToMakeStringsMatch-Meister},
polynomial in $m$ of degree $2d+1$, is a very pessimistic approximation of
the computational complexity of the distance. At one extreme, the
\textsc{Swap-Insert Correction} distance between two strings which are
very similar (e.g. only a finite number of symbols need to be swapped or
inserted) can be computed in time linear in $n$ and $d$. At the other
extreme, the \textsc{Swap-Insert Correction} distance of strings which are
completely different (e.g. their effective alphabets are disjoint) can
also be computed in linear time (it is then close to $n+m$). Even when $S$
and $L$ are quite different, $\delta(S,L)$ can be ``easy'' to compute:
when mostly swaps are involved to transform $S$ into $L$ (i.e. $S$ and $L$
are almost of the same length), and when mostly insertions are involved to
transform $S$ into $L$ (i.e. many symbols present in $L$ are absent from
$S$).

\noindent\textbf{Hypothesis:}
\label{sec:hypothesis}
We consider whether the \textsc{Swap-Insert Correction} distance
$\dist(S,L)$ can be computed in time polynomial in the length of the input
strings for a constant alphabet size, while still taking advantage of
cases such as those described above, where the distance $\dist(S,L)$ can
be computed much faster.

\noindent\textbf{Our Results:}
\label{sec:our-results}
After a short review of previous results and techniques in
Section~\ref{sec:previous-work}, we present such an algorithm in
Section~\ref{sec:algorithm}, in four steps: the intuition behind the
algorithm in Section~\ref{sec:highlevel},
the formal description of
the dynamic program in Section~\ref{sec:recursive}, and the formal
analysis of its complexity in Section~\ref{sec:dp}.
\def\K{\sum_{\alpha=1}^{\D}(m_{\alpha}-g_{\alpha})}
In the latter, we define the local imbalance
$g_{\alpha}=\min\{n_{\alpha},m_{\alpha}-n_{\alpha}\}$ for each symbol
$\alpha\in\Sigma$, summarized by the global imbalance measure
$g=\max_{\alpha\in\Sigma} g_\alpha$, and prove that our algorithm runs in
time within
\begin{LONG}
\[
 O\left(
  d(n+m) 
  +d^2n
  \cdot  \K\cdot
    \prod_{\alpha\in\Sigma_{+}}(g_\alpha+1)
  \right),
\]
in the worst case over all instances of fixed sizes $n$ and $m$, with
imbalance vector $(g_1,\ldots,g_d)$; where
$\Sigma_{+}=\{\alpha\in\Sigma: g_{\alpha}>0\}$ if $g_\alpha=0$ for any
$\alpha\in\Sigma$, and
$\Sigma_{+}=\Sigma\setminus \{\arg\min_{\alpha\in\Sigma}g_{\alpha}\}$
otherwise.
This simplifies to within
\end{LONG}
 $O(d^2 g^{d-1} n m )$ in the worst case over
instances where $d,n,m$ and $g$ are fixed.

We discuss in Section~\ref{sec:conclusions} some implied results, and some
questions left open\begin{LONG}, such as when the operators are assigned
asymmetric costs, when the algorithm is required to output the sequence of
corrections, when only swaps are allowed, or when the distribution of the
frequencies of the symbol is very unbalanced\end{LONG}.  
\begin{SHORT}
Additional details are deferred to the full
version~\cite{2015-ARXIV-AdaptiveComputationOfTheSwapInsertEdutionDistance-BarbayPerez}.
\end{SHORT}

\section{Background}
\label{sec:previous-work}

In 1974, motivated by the problem of correcting typing and transmission
errors, Wagner and Fischer~\cite{wagner1974string} introduced the
\pb{String-to-String Correction}, which is to compute the minimum number
of corrections required to change the source string $S$ into the target
string $L$. They considered the following operators:
the \texttt{insertion} of a symbol at some arbitrary position,
the \texttt{deletion} of a symbol at some arbitrary position, and
the \texttt{substitution} of a symbol at some arbitrary position.
They described a dynamic program solving this problem in time within
$O(nm)$ when $S$ is of length $n$ and $L$ of length $m$.  The worst case
among instances of fixed input size $n+m$ is when $n=m/2$, which yields a
complexity within $O(n^2)$.

\begin{TODO}
Mention some adaptive results on other distances.
\end{TODO}

In 1975, 
Lowrance and
Wagner~\cite{wagner1975extension} extended the \textsc{String-to-String
  Correction} distance to the cases where one considers not only the
\texttt{insertion}, \texttt{deletion}, and \texttt{substitution}
operators, but also the \texttt{swap} operator, which exchanges the
positions of two contiguous symbols.  Not counting the identity, fifteen
different variants arise when considering any given subset of those four
correction operators. Thirteen of those variants can be computed in
polynomial
time~\cite{wagner1975complexity,wagner1974string,wagner1975extension}.
The two remaining distances, the computation of the \textsc{Swap-Insert
  Correction} distance and its symmetric the \textsc{Swap-Delete
  Correction} distance, are equivalent by symmetry, and are NP-hard to
compute~\cite{wagner1975complexity}, hence our interest. All our results
on the computation of the \textsc{Swap-Insert Correction} distance from
$S$ to $L$ directly imply the same results on the computation of the
\textsc{Swap-Delete Correction} distance from $L$ to $S$.

\begin{LONG}
In 2011, Abu-Khzam et al.~\cite{AbuKhzam201141} described an algorithm
computing the \textsc{Swap-Delete Correction} distance from a string $L$
to a string $S$ (and hence the \textsc{Swap-Insert Correction} from $S$ to
$L$). Their algorithm decides if this distance is at most a given
parameter $k$, in time within $O({1.6181}^k m)$. This indirectly yields an
algorithm computing both distances in time within
$O({1.6181}^{\delta(S,L)} m)$: testing values of $k$ from $0$ to infinity
in increasing order yields an algorithm computing the distance in time
within
$O(\sum_{k=0}^{\delta(S,L)}{1.6181}^k m) \subset
O({1.6181}^{\delta(S,L)}m)$.
Since any correct algorithm must verify the correctness of its output,
such an algorithm implies the existence of an algorithm with the same
running time which outputs a minimum sequence of corrections from $S$ to
$L$.  Later in 2013, Watt~\cite{watt2013} showed that computing the
\textsc{Swap-Deletion Correction} distance has a kernel size of $O(k^4)$.
\end{LONG}

In 2013, Spreen~\cite{spreen2013} observed that Wagner's NP-hardness
proof~\cite{wagner1975complexity} was based on unbounded alphabet
sizes (i.e.\ the \pb{Swap-Insert Correction} is NP-hard when the size
$d$ of the alphabet is part of the input), and suggested that this
problem might be tractable for fixed alphabet sizes. He described some
polynomial-time algorithms for various special cases when the alphabet is binary, and 
some more general properties.

In 2014, Meister~\cite{2015-TCS-UsingSwapsAndDeletesToMakeStringsMatch-Meister} extended Spreen's
work~\cite{spreen2013} to an algorithm computing the
\textsc{Swap-Insert Correction} distance from a string $S$ of length
$n$ to another string $L$ of length $m$ on any fixed alphabet size
$\D\ge 2$, in time polynomial in $n$ and $m$.  The algorithm is explicitly
based on finding an injective function
$\varphi:[1.. n]\rightarrow [1.. m]$ such that $\varphi(i)=j$ if and only
if $S[i]=L[j]$, and the total number of crossings is minimized. Two
positions $i<i'$ of $S$ define a {\em crossing} if and only if
$\varphi(i)>\varphi(i')$.  Such a number of crossings equals the number of
swaps, and the number of insertions is always
equal to $m-n$. Meister proved that the time complexity of this algorithm
is equal to $(m+1)^{2d+1}\cdot (n+1)^2$ times some function polynomial in
$n$ and $m$.

\begin{LONG}
We describe in the following section an algorithm computing the {\sc
  Swap-Insert Correction} distance in explicit polynomial
time, and which running time goes gradually down to linear for easier
cases.
\end{LONG}

\section{Algorithm}\label{sec:algorithm}

We describe the intuition behind our algorithm in
Section~\ref{sec:highlevel}, 
the high level description of the
dynamic program in Section~\ref{sec:recursive}, 
\begin{LONG}
the full code of the algorithm in Section~\ref{sec:complete-algorithm}
\end{LONG}
and the formal analysis of its complexity in Section~\ref{sec:dp}.

\subsection{High level description}
\label{sec:highlevel}

The algorithm runs through $S$ and $L$ from left to right, building a
mapping from the characters of $S$ to a subset of the characters of $L$,
using the fact that, for each distinct character, the mapping function on
positions is monotone.  The Dynamic Programming matrix has size
$n_1\times\cdots\times n_d<n^d$.

For every string $X\in\{S,L\}$, let $X[i]$ denote the $i$-th symbol of $X$
from left to right ($i\in[1..|X|]$), and $X[i..j]$ denote the
substring of $X$ from the $i$-th symbol to the $j$-th symbol ($1\le i\le j \le |X|$). 
For every $1\leq j<i\leq n$, let $X[i..j]$ denote the empty string.
Given any symbol $\alpha\in\Sigma$, let $rank(X,i,\alpha)$ denote the
number of occurrences of the symbol $\alpha$ in the substring $X[1..i]$, and
$select(X,k,\alpha)$ denote the value $j\in[1..|X|]$ such that the $k$-th
occurrence of $\alpha$ in $X$ is precisely at position $j$, if $j$
exists. If $j$ does not exist, then $select(X,k,\alpha)$ is $null$.

The algorithm runs through $S$ and $L$ simultaneously from left to right,
skipping positions where the current symbol of $S$ equals the current
symbol of $L$, and otherwise branching out between two options to correct
the current symbol of $S$: inserting a symbol equal to the current symbol
of $L$ in the current position of $S$, or moving (by applying many swaps)
the first symbol of the part not scanned of $S$ equal to the current
symbol of $L$, to the current position in $S$. 

More formally, the computation of $\delta(S,L)$ can be reduced to the
application of four rules:
\begin{itemize}
\item {\bf if $S$ is empty}: We just return the length $|L|$ of $L$, since
  insertions are the only possible operations to perform in $S$.

\item {\bf if some $\alpha\in\Sigma$ appears more times in $S$ than in $L$}: We return $+\infty$, since
{\tt delete} operations are not allowed to make $S$ and $L$ match.

\item {\bf if $S$ and $L$ are not empty, $S[1]=L[1]$}: We return $\delta(S[2..|S|],L[2..|L|])$.

\item {\bf if $S$ and $L$ are not empty, $S[1]\neq L[1]$}: We compute two distances:
the distance $d_{ins}=1+\delta(S,L[2..|L|])$ corresponding to an {\tt insertion} of the
symbol $L[1]$ at the first position of $S$, and the distance $d_{swaps}=(r-1)+\delta(S',L[2..|L|])$
corresponding to perform $r-1$ {\tt swaps} to bring to the first position of $S$ the first symbol
of $S$ equal to $L[1]$. In this case, $r$ denotes the position of such a symbol, and $S'$ the
string resulting from $S$ by removing that symbol. We then return $\min\{d_{ins},d_{swaps}\}$.

\end{itemize}

There can be several overlapping subproblems in the recursive definition
of $\delta(S,L)$ described above, which calls for {\em dynamic
  programming}~\cite{Cormen2009} and {\em memoization}\begin{LONG}\footnote{Cormen et al. \cite{Cormen2009} explain that {\em memoization}
  comes from \emph{memo}, referring to the fact that the technique
  consists in recording a value so that we can look it up later.}\end{LONG}.
In any call $\delta(S',L')$ in the recursive computation of
$\delta(S,L)$, the string $L'$ is always a substring $L[j..|J|]$ for
some $j\in[1..|J|]$, and can thus be replaced by such an index $j$,
but this is not always the case for the string $S'$. Observe that $S'$
is a substring $S[i..|S|]$ for some $i\in[1..|S|]$ with (eventually)
some symbols removed.  Furthermore, if for some symbol $\alpha\in\Sigma$
precisely $c_{\alpha}$ symbols $\alpha$ of $S[i..|S|]$ have been
removed, then those symbols are precisely the first $c_{\alpha}$
symbols $\alpha$ from left to right.  We can then represent $S'$ by
the index $i$ and a counter $c_{\alpha}$ for each symbol
$\alpha\in\Sigma$ of how many symbols $\alpha$ of $S[i..|S|]$ are
removed (i.e.\ ignored).  In the above fourth rule, the position $r$
is equivalent to the position of the $(c_{L[1]}+1)$-th occurrence of
the symbol $L[1]$ in $S[i..|S|]$.  To quickly compute $r$, the
functions $rank$ and $select$ will be used.




Let $\COUNTERS=\prod_{\alpha=1}^{\D}[0..n_{\alpha}]$ denote the domain of
such vectors of counters, where for any
$\overline{c}=(c_1,c_2,\ldots,c_{\D})\in\COUNTERS$, $c_{\alpha}$ denotes
the counter for $\alpha\in\Sigma$.  Using the ideas described above, the
algorithm recursively computes the extension $\DIST(i,j,\overline{c})$ of
$\delta(S,L)$, defined for each $i\in[1..n+1]$, $j\in[1..m+1]$, and
$\overline{c}=(c_1,c_2,\ldots,c_{\D})\in\COUNTERS$, as the value of
$\dist(S[i..n]_{\overline{c}},L[j..m])$, where $S[i..n]_{\overline{c}}$ is
the string obtained from $S[i..n]$ by removing (i.e.\ ignoring) for each
$\alpha\in\Sigma$ the first $c_{\alpha}$ occurrences of $\alpha$ from left
to right.

Given this definition, \( \dist(S,L) = \DIST(1,1,\overline{0}), \)
where $\overline{0}$ denotes the vector $(0,\ldots,0)\in\COUNTERS$.  Given
$i$, $j$, and $\overline{c}$, 
$\DIST(i,j,\overline{c})<+\infty$ if and only if for each symbol
$\alpha\in\Sigma$ the number of considered (i.e.\ not removed or ignored)
$\alpha$ symbols in $S[i..n]$ is at most the number of $\alpha$ symbols in
$L[j..m]$. That is,
\( count(S,i,\alpha)-c_{\alpha}~\le~ count(L,j,\alpha) \)
for all $\alpha\in\Sigma$, where
$count(X,i,\alpha)=rank(X,|X|,\alpha)-rank(|X|,i-1,\alpha)$ is the number
of symbols $\alpha$ in the string $X[i..|X|]$.  In the following, we show
how to compute $\DIST(i,j,\overline{c})$ recursively for every $i$, $j$,
and $\overline{c}$. For a given $\alpha\in\Sigma$, let
$\overline{w}_{\alpha}\in\COUNTERS$ be the vector whose components are all
equal to zero except the $\alpha$-th component that is equal to 1.

\subsection{Recursive computation of $\DIST(i,j,\overline{c})$}
\label{sec:recursive}

We will use the following observation which considers the \texttt{swap} operations
performed in the optimal transformation from a
short string $S$ of length $n$ to a larger string $L$ of length $m$.
\begin{observation}[\cite{AbuKhzam201141,spreen2013}]
\label{obs:swaps}
The \texttt{swap} operations used in any optimal solution satisfy the
following properties: two equal symbols cannot be swapped; each symbol is
always swapped in the same direction in the string; and if some symbol is
moved from some position to another by performing \texttt{swaps}
operations, then no symbol equal to it can be inserted afterwards between
these two positions.
\end{observation}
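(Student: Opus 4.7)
The plan is to prove each of the three properties by an exchange argument: assume an optimal transformation sequence violates one of them, and exhibit a strictly shorter valid sequence, contradicting optimality. To make these exchange arguments clean despite insertions shifting absolute positions, I will first set up a bookkeeping framework in which each character in the current working string carries a persistent \emph{token} identity. An insertion creates a fresh token, and a swap permutes two adjacent tokens without affecting their identities. Under this convention, the ``position of a token over time'' is a well-defined sequence of integers, and each swap either increments or decrements one token's position by one (while doing the opposite to its neighbor), which is what we need to reason about the ``direction'' in which a symbol is moved.

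For the first property, a swap of two adjacent tokens carrying the same symbol leaves the underlying string unchanged, so the swap can simply be deleted from the sequence, producing a valid sequence one shorter. For the second property, suppose a token $t$ is swapped both rightward (at some step $s_1$) and leftward (at some later step $s_2$); choose $s_1$ as the latest rightward move of $t$ occurring before some leftward move, and $s_2$ as the earliest leftward move of $t$ after $s_1$. Between $s_1$ and $s_2$, $t$'s position does not change (no swap of $t$ occurs strictly between them). I would then argue that the two swaps involving $t$ can be absorbed: by the first property, the tokens swapped with $t$ at $s_1$ and $s_2$ carry symbols different from $t$'s, so removing both swaps and reinterpreting subsequent operations on the adjusted positions yields an equivalent but shorter valid sequence.

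For the third property, suppose token $t$ carrying symbol $\alpha$ is moved by a chain of swaps from its initial position $p$ to a final position $q$, and afterwards a fresh token $t'$ carrying the same $\alpha$ is inserted ``between'' $p$ and $q$ (in the sense of the positions in the string snapshot just before insertion, appropriately translated). The exchange is to insert $t'$ at the earlier location (so that the role of $t$ in the final string is played by $t'$) and omit the chain of swaps originally applied to $t$; since $t$ and $t'$ carry the same symbol, the final string is unchanged, but at least one swap has been eliminated.

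The main obstacle is making ``between positions $p$ and $q$'' precise and verifying that subsequent operations remain legal after the exchange, because insertions and swaps both modify positions for later operations. The token bookkeeping handles this: I replace absolute positions by the relative order of tokens, which is invariant under insertions elsewhere and changes predictably under swaps, so each rewriting above becomes a purely local edit of the operation list whose effect on the final string is easy to verify.
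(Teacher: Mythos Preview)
The paper does not prove this observation; it is quoted with citations to prior work and used as a black box, so there is no in-paper argument to compare yours against. Your exchange strategy with persistent tokens is the natural route, and Property~1 goes through exactly as you say. For Property~2 the phrase ``removing both swaps and reinterpreting subsequent operations'' is doing real work: once you delete the swap at $s_1$, every later operation that touched the token $u$ (the one $t$ swapped past at $s_1$) may now see a different neighbor, and you have not argued that the intermediate segment can still be replayed to the same configuration just before $s_2$. This can be repaired, e.g.\ by first normalizing to ``all insertions, then all swaps'' and invoking the inversion-count lower bound for adjacent transpositions, but that repair is not in your write-up.

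For Property~3 the proposed exchange is actually incorrect. When you ``omit the chain of swaps originally applied to $t$,'' each token that $t$ swapped past is left one position off from where it sat in the original run, so the final string is \emph{not} preserved merely because $t$ and $t'$ carry the same symbol. Concretely, take $S=xya$ and $L=aaxy$: the sequence ``swap $(2,3)$, swap $(1,2)$, insert $a$ at position~$2$'' is optimal (three operations) and moves the original $a$ from position~$3$ to position~$1$ before inserting another $a$ at position~$2$. Your exchange --- insert the new $a$ at position~$1$ and drop both swaps --- produces $axya\neq L$. This example also shows that Property~3, read literally as a constraint on \emph{every} optimal sequence, is too strong; what the algorithm actually needs (and what the cited sources justify) is that \emph{some} optimal sequence avoids such an insertion, and that is obtained by a cost-preserving rearrangement rather than the strictly cost-reducing exchange you attempt.
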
 
The following lemma deals with the basic case where $S[i..n]$ and
$L[j..m]$ start with the same symbol, i.e. $S[i]=L[j]$. When the beginnings
of both strings are the same, matching those two symbols seems like an
obvious choice in order to minimize the distance, but one must be careful
to check first if the first symbol from $S[i..n]$ has not been scheduled to be
``swapped'' to an earlier position, in which case it must be ignored and
skipped:

\begin{lemma}\label{lem:core1}
Given two strings $S$ and $L$ over the alphabet $\Sigma$,
for any positions $i\in[1..n]$ in $S$ and $j\in[1..m]$ in
$L$, for any vector of counters
$\overline{c}=(c_1,\ldots,c_{\D})\in\COUNTERS$ and for any symbol
$\alpha\in\Sigma$,
\[
\left.
  \begin{array}{l}
  S[i]=L[j]=\alpha \\
  c_{\alpha}=0
  \end{array}
\right\} 
~\Longrightarrow~
\DIST(i,j,\overline{c})~=~\DIST(i+1,j+1,\overline{c}).
\]
\end{lemma}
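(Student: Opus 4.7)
My plan is to prove both inequalities separately, using the characterization (reviewed in Section~\ref{sec:previous-work}) of $\delta(S', L')$ as the minimum, over injective character-preserving maps $\varphi$ from the positions of $S'$ into those of $L'$, of the number of crossings of $\varphi$ plus the $|L'| - |S'|$ forced insertions. Since $c_\alpha = 0$ and $S[i] = L[j] = \alpha$, both $S[i..n]_{\overline{c}}$ and $L[j..m]$ begin with~$\alpha$, so matching the two leading $\alpha$'s is a well-defined operation.

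For $\DIST(i,j,\overline{c}) \leq \DIST(i+1,j+1,\overline{c})$, I extend an optimal $\varphi'$ for the right-hand side to a map $\varphi$ for the left-hand side by setting $\varphi(i) = j$ and $\varphi(k) = \varphi'(k)$ for each remaining non-ignored position $k$. Because $j$ is the leftmost target position, the added pair $(i, j)$ participates in no crossing; and because both source and target each gain one position, the forced insertion count $|L'| - |S'|$ is unchanged. Hence $\varphi$ witnesses the same cost on the larger instance.

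For $\DIST(i+1,j+1,\overline{c}) \leq \DIST(i,j,\overline{c})$, I start from an optimal $\varphi$ for the left-hand side and show it can be chosen so that $\varphi(i) = j$; restricting that map to the remaining positions then yields a matching for the right-hand side of identical cost. Let $j_0 := \varphi(i) \geq j$. If $j_0 > j$, I distinguish two subcases. If $\varphi(i'') = j$ for some $i'' > i$, then $S[i] = S[i''] = \alpha$ and the pair $(i, i'')$ forms a crossing of two equal symbols, contradicting Observation~\ref{obs:swaps}. Otherwise $L[j]$ is obtained by insertion, and I rewire $\varphi$ by resetting $\varphi(i) := j$; the new map is still injective (since $j$ was unused in the image) and still character-preserving (both $L[j]$ and $L[j_0]$ equal $\alpha$), the insertion count stays the same ($L[j_0]$ simply replaces $L[j]$ as the inserted symbol), and because $\varphi(i)$ is only being decreased while every other $\varphi(k)$ remains fixed, no new crossing can appear. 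The rewired $\varphi$ is therefore also optimal and satisfies $\varphi(i) = j$.

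The main obstacle will be the rewiring in the insertion subcase: I need to check carefully that lowering $\varphi(i)$ from $j_0$ to $j$ preserves injectivity and the character constraint and introduces no new crossing with any later position $i' > i$, which holds because a crossing $(i, i')$ requires $\varphi(i) > \varphi(i')$ and lowering $\varphi(i)$ can only eliminate such pairs. Once an optimal $\varphi$ with $\varphi(i) = j$ is obtained, the restriction to the remaining positions is routine bookkeeping: the matched pair $(i, j)$ is removed from both sides, crossings involving $i$ are zero because $j$ is minimal in the target range, and the difference in lengths and hence the insertion count is preserved.
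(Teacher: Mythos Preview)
Your proof is correct and takes a genuinely different route from the paper's. The paper disposes of the harder direction by citing Abu-Khzam et al.\ \cite[Corollary~1]{AbuKhzam201141}, which states that when $X[1]=Y[1]$ one has $\delta(X,Y)\le k$ if and only if $\delta(X[2..|X|],Y[2..|Y|])\le k$; combined with the trivial inequality $\delta(X,Y)\le\delta(X[2..|X|],Y[2..|Y|])$ this gives equality, and the lemma follows by instantiating $X=S[i..n]_{\overline{c}}$ and $Y=L[j..m]$. You instead argue directly in Meister's matching framework, normalizing an optimal~$\varphi$ so that $\varphi(i)=j$ and then restricting. Your approach is longer but self-contained and independent of the cited black box; it also makes the structural reason (the leading $\alpha$ can always be matched to the leading $\alpha$ at no extra cost) explicit rather than delegated.

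One minor remark on your Subcase~1: Observation~\ref{obs:swaps} is phrased for operation sequences, not for matchings~$\varphi$, so the sentence ``contradicting Observation~\ref{obs:swaps}'' hides a short translation step between the two viewpoints. Staying in your framework, it is cleaner to note directly that exchanging the images $\varphi(i)\leftrightarrow\varphi(i'')$ (both positions carry the symbol~$\alpha$) removes the crossing $(i,i'')$ and creates no new one, contradicting the optimality of~$\varphi$. This is implicit in what you wrote and does not affect the validity of your argument.
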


\begin{proof}
Given strings $X,Y$ in the alphabet $\Sigma$, and an integer $k$,
Abu-Khzam et al.~\cite[Corollary 1]{AbuKhzam201141} proved 
that if $X[1]=Y[1]$, then:
\[
	\delta(X,Y) ~\le~ k~\text{if and only if}~\delta(X[2..|X|],Y[2..|Y|]) ~\le~ k.
\]
Given that one option to transform $X$ into $Y$
with the minimum number of operations is to transform $X[2..|X|]$ into
$Y[2..|Y|]$ with the minimum number of operations (matching $X[1]$ with
$Y[1]$), we have:
\[
	\delta(X,Y) ~\le~ \delta(X[2..|X|],Y[2..|Y|]).
\]
By selecting $k=\delta(X,Y)$, we obtain the equality
\[
	\delta(X,Y) ~=~ \delta(X[2..|X|],Y[2..|Y|]).
\]
Then, since the symbol $\alpha=S[i]$ must be considered (because $c_{\alpha}=0$),
and $S[i]=L[j]$, we can apply the above statement for
$X=S[i..n]_{\overline{c}}$ and $Y=L[j..m]$ to obtain the next equalities:
\[
	\DIST(i,j,\overline{c}) = \delta(X,Y) ~=~ \delta(X[2..|X|],Y[2..|Y|]) = \DIST(i+1,j+1,\overline{c}).
\]
The result thus follows.\qed
\end{proof}

The second simplest case is when the first available symbol of $S[i..n]$
is already matched (through {\tt swaps}) to a symbol from $L[1..j-1]$. The
following lemma shows how to simply skip such a symbol:

\begin{lemma}\label{lem:core2}
Given $S$ and $L$ over the alphabet $\Sigma$,
for any positions $i\in[1..n]$ in $S$ and $j\in[1..m]$ in
$L$, and for any vector of counters
$\overline{c}=(c_1,\ldots,c_{\D})\in\COUNTERS$ and for any symbol
$\alpha\in\Sigma$,
\[
\left.
  \begin{array}{l}
  S[i]=\alpha \\
  c_{\alpha}>0
  \end{array}
\right\} 
 \implies 
	\DIST(i,j,\overline{c}) = \DIST(i+1,j,\overline{c}-\overline{w}_{\alpha}).
\]
\end{lemma}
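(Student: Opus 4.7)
The plan is to reduce the equation to a direct consequence of the definition of $S[i..n]_{\overline{c}}$ and of $\DIST$, without needing any new combinatorial argument: Lemma~\ref{lem:core2} is essentially saying ``the position we are looking at has already been scheduled for a swap, so we just skip it and update the bookkeeping''.

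First I would spell out what $S[i..n]_{\overline{c}}$ looks like under the hypotheses. Since $c_\alpha \geq 1$, we remove at least one occurrence of $\alpha$ from $S[i..n]$; by definition we remove the leftmost such occurrences, and since $S[i] = \alpha$, the very first symbol stripped from $S[i..n]$ is exactly the symbol at position $i$. Therefore the substring $S[i..n]_{\overline{c}}$ coincides, character by character, with the substring obtained from $S[i+1..n]$ by removing the first $c_\alpha - 1$ occurrences of $\alpha$ and, for every $\beta \neq \alpha$, the first $c_\beta$ occurrences of $\beta$. That is exactly the string $S[i+1..n]_{\overline{c} - \overline{w}_\alpha}$.

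Next I would plug this string identity into the definition of $\DIST$:
\[
\DIST(i,j,\overline{c}) \;=\; \dist\bigl(S[i..n]_{\overline{c}}, L[j..m]\bigr) \;=\; \dist\bigl(S[i+1..n]_{\overline{c}-\overline{w}_\alpha}, L[j..m]\bigr) \;=\; \DIST(i+1, j, \overline{c}-\overline{w}_\alpha),
\]
which is exactly the claimed equality. I would also briefly note that the right-hand side is well defined: the component $c_\alpha - 1$ is nonnegative by the hypothesis $c_\alpha > 0$, so $\overline{c} - \overline{w}_\alpha$ still lies in $\COUNTERS$, and $i+1 \in [1..n+1]$.

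There is no real obstacle here; the only thing to be careful about is to justify cleanly that removing ``the first $c_\alpha$ occurrences of $\alpha$ in $S[i..n]$'' amounts, when $S[i]=\alpha$, to removing position $i$ together with ``the first $c_\alpha - 1$ occurrences of $\alpha$ in $S[i+1..n]$'', which is immediate from the left-to-right ordering used in the definition of $S[i..n]_{\overline{c}}$. Once this is stated explicitly, the lemma follows without invoking Observation~\ref{obs:swaps} or Lemma~\ref{lem:core1}.
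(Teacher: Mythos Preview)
Your proposal is correct and follows essentially the same approach as the paper's own proof: both argue that since $c_\alpha>0$ and $S[i]=\alpha$, the symbol $S[i]$ is among the ignored occurrences, hence $S[i..n]_{\overline{c}}=S[i+1..n]_{\overline{c}-\overline{w}_\alpha}$ and the equality of $\DIST$ values follows directly from the definition. Your write-up is slightly more explicit (you state the string identity separately and check that $\overline{c}-\overline{w}_\alpha\in\COUNTERS$), but the underlying argument is identical.
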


\begin{proof}
Since $c_{\alpha}>0$, the first $c_{\alpha}$ symbols $\alpha$ of $S[i..n]$
have been ignored, thus $S[i]$ is ignored. Then, $\DIST(i,j,\overline{c})$
must be equal to $\DIST(i+1,j,\overline{c}-\overline{w}_{\alpha})$, case
in which $c_{\alpha}-1$ symbols $\alpha$ of $S[i+1..n]$ are ignored.
\qed\end{proof}

The most important case is when the first symbols of $S[i..n]$ and
$L[j..m]$ do not match: the minimum ``path'' from $S$ to $L$ can then
start either by an \texttt{insertion} or a \texttt{swap} operation.

\begin{lemma}\label{lem:core3}
Given $S$ and $L$ over the alphabet $\Sigma$, for any
positions $i\in[1..n]$ in $S$ and $j\in[1..m]$ in $L$, and for any vector
of counters $\overline{c}=(c_1,\ldots,c_{\D})\in\COUNTERS$,
note $\alpha,\beta\in\Sigma$ the symbols $\alpha=S[i]$ and
$\beta=L[j]$,
$r$ the position $r=select(S,rank(S,i,\beta)+c_{\beta}+1,\beta)$ in $S$ of
the $(c_{\beta}+1)$-th symbol $\beta$ of $S[i..n]$, and 
$\Delta$ the number $\sum_{\theta=1}^{\D}\min\{c_{\theta},$
$rank(S,r,\theta)-rank(S,i-1,\theta)\}$ of symbols ignored in $S[i..r]$.

If $\alpha\neq\beta$ and $c_{\alpha}=0$, then
$\DIST(i,j,\overline{c})=\min\{d_{ins},d_{swaps}\}$, where
\[
d_{ins} ~=~ \left\{\begin{array}{ll}
  \DIST(i,j+1,\overline{c}) + 1 & \text{ if } c_{\beta}=0\\
  +\infty & \text{ if } c_{\beta}>0\\
	\end{array}\right.
\]
and
\[
d_{swaps} ~=~ \left\{\begin{array}{ll}
  (r-i)-\Delta+\DIST(i,j+1,\overline{c}+\overline{w}_{\beta}) & \text{ if } r\neq 0\\
  +\infty & \text{ if } r= 0.\\
	\end{array}\right.
\]
\end{lemma}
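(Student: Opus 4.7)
The plan is to prove the equality in two directions: an upper bound $\DIST(i,j,\overline{c}) \leq \min\{d_{ins}, d_{swaps}\}$ by exhibiting two explicit strategies, and a matching lower bound $\DIST(i,j,\overline{c}) \geq \min\{d_{ins}, d_{swaps}\}$ by case analysis on an optimal transformation. The hypotheses $\alpha \neq \beta$ and $c_\alpha = 0$ ensure that $S[i]=\alpha$ is the leftmost non-ignored symbol of $S[i..n]_{\overline{c}}$ and cannot be consumed by matching $L[j]=\beta$, so any transformation must either insert a $\beta$ at the current leftmost position or bring some non-ignored $\beta$ from $S[i..n]$ leftward through swaps.

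For the upper bound, I would exhibit the two strategies separately. The insertion strategy applies when $c_\beta=0$: insert a $\beta$ at the leftmost position (cost $1$), matching $L[j]$, and recurse via $\DIST(i,j+1,\overline{c})$, attaining $d_{ins}$. The swap strategy applies when $r\neq 0$: slide $S[r]$ leftward through the non-ignored symbols of $S[i..r-1]$. By Observation~\ref{obs:swaps}, no symbol in $S[i..r-1]$ is a non-ignored $\beta$ (since $r$ is the first such), so each swap is between $\beta$ and a symbol different from itself. The number of such swaps equals the count of non-ignored positions in $S[i..r-1]$; since $S[r]$ itself is non-ignored, this count is $(r-i+1)-\Delta-1=(r-i)-\Delta$, where $\Delta$ tallies the ignored positions in $S[i..r]$. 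After matching the arrived $\beta$ to $L[j]$, we recurse with one additional $\beta$ marked ignored, i.e.\ via $\DIST(i,j+1,\overline{c}+\overline{w}_\beta)$, attaining $d_{swaps}$.

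For the lower bound, consider any optimal transformation. In the final $L[j..m]$, the symbol $L[j]$ is either the result of an insertion or supplied by a non-ignored $\beta$ from $S[i..n]$. In the insertion case, if $c_\beta>0$ then some earlier $\beta$ from $S[i..n]$ would have been swapped leftward across the current boundary to match a position in $L[1..j-1]$, and the third property of Observation~\ref{obs:swaps} forbids inserting a $\beta$ between the destination and the origin of that move; hence $c_\beta=0$ and the total cost is at least $1+\DIST(i,j+1,\overline{c})=d_{ins}$. In the supplying case, the first property of Observation~\ref{obs:swaps} (two equal symbols cannot be swapped) forces the matched $\beta$ to be the leftmost non-ignored one, namely $S[r]$, so $r\neq 0$; the swaps bringing $S[r]$ to the front cost at least $(r-i)-\Delta$ by the same counting as above, and the remainder costs at least $\DIST(i,j+1,\overline{c}+\overline{w}_\beta)$, totaling $d_{swaps}$.

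The main obstacle I expect is formalizing the two Observation-based arguments cleanly: first, that $c_\beta>0$ truly prevents a valid insertion at the current step, which requires tracing that the current boundary lies strictly between the destination and origin of some previously moved $\beta$; and second, that among all feasible matchings of $L[j]$ the leftmost non-ignored $\beta$ at position $r$ is forced. The accounting of swaps via $\Delta$ should then follow by a routine rank/select computation using $rank(S,r,\beta)-rank(S,i-1,\beta)=c_\beta+1$ for the $\beta$-component and the definitions directly for the others.
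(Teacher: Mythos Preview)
Your proposal is correct and follows essentially the same approach as the paper: split into the two options ``insert $\beta$'' versus ``bring the first non-ignored $\beta$ forward by swaps,'' use Observation~\ref{obs:swaps} to rule out insertion when $c_\beta>0$ and to force the matched $\beta$ to be the leftmost non-ignored one, and count the swaps as $(r-i)-\Delta$ via the position $r'=(r-i+1)-\Delta$ in $S[i..n]_{\overline{c}}$. Your explicit upper-bound/lower-bound decomposition is in fact more carefully structured than the paper's proof, which simply asserts that these are ``the two possibilities'' and appeals to Observation~\ref{obs:swaps} without spelling out the crossing argument you sketch for the lower bound.
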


\begin{proof}
Let $S'[1..n']=S[i..n]_{\overline{c}}$.  
Given that $\alpha\neq\beta$ and $c_{\alpha}=0$, there are two
possibilities for $\DIST(i,j,\overline{c})$: (1) transform $S'[1..n']$
into $L[j+1..m]$ with the minimum number of operations, and after that
insert a symbol $\beta$ at the first position of the resulting
$S'[1..n']$; or (2) swap the first symbol $\beta$ in $S'[2..n']$ from left
to right from its current position $r'$ to the position $1$ performing
$r'-1$ \texttt{swaps}, and then transform the resulting $S'[2..n']$ into
$L[j+1..m]$ with the minimum number of operations.
Observe that option (1) can be performed if and only if there
is no symbol $\beta$ ignored in $S[i..n]$ (see Observation~\ref{obs:swaps}). If this is the case, then
$\DIST(i,j,\overline{c})=\DIST(i,j+1,\overline{c}) + 1$.  Option (2) can
be used if and only if there is a non-ignored symbol $\beta$ in $S[i..n]$,
where the first one from left to right is precisely at position
$r=select(S,rank(S,i,\beta)+c_{\beta}+1,\beta)$. In such a case
$r'=(r-i+1)-\Delta$, where $\Delta=\sum_{\theta=1}^{\D}\min\{c_{\theta},$
$rank(S,r,\theta)-rank(S,i-1,\theta)\}$ is the total number of ignored
symbols in the string $S[i..r]$.  Hence, the number of swaps\texttt{}
counts to $r'-1=(r-i)-\Delta$.  Then, the correctness of $d_{ins}$,
$d_{swaps}$, and the result follow.  \qed\end{proof}

The next two lemmas deal with the cases where one string is completely
processed.  When $L$ has been completely processed, either the remaining
symbols in $S$ have all previously been matched via \texttt{swaps} and the
distance equals zero, or there is no sequence of operations correcting $S$
into $L$:

\begin{lemma}\label{lem:easy1}
Given $S$ and $L$ over the alphabet $\Sigma$, for any
positions $i\in[1..n+1]$ in $S$ and $j\in[1..m]$ in $L$, for any vector of
counters $\overline{c}=(c_1,\ldots,c_{\D})\in\COUNTERS$,
\[
\DIST(i,m+1,\overline{c})= \left\{ 
  \begin{array}{cl}
  0 & \mbox{ if $c_1+\ldots+c_{\D}=n-i+1$ and} \\
  +\infty &  \mbox{ otherwise.}
  \end{array}\right.
\]
\end{lemma}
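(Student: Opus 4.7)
The plan is to unfold the semantic definition of $\DIST$. By construction, $\DIST(i, m+1, \overline{c}) = \dist(S[i..n]_{\overline{c}}, L[m+1..m])$, and by the convention stated in Section~\ref{sec:highlevel}, $L[m+1..m]$ is the empty string. Hence the task reduces to computing the \textsc{Swap-Insert Correction} distance from $S[i..n]_{\overline{c}}$ to the empty string.

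The next step is to observe that the length of $S[i..n]_{\overline{c}}$ is exactly $(n-i+1) - \sum_{\alpha=1}^{\D} c_\alpha$, since removing the first $c_\alpha$ occurrences of each symbol $\alpha$ strips away $c_\alpha$ characters from the $n-i+1$ characters of $S[i..n]$. I would then split into cases on this length. If $\sum_{\alpha=1}^{\D} c_\alpha = n - i + 1$, both strings are empty and the distance is $0$ trivially. If $\sum_{\alpha=1}^{\D} c_\alpha < n - i + 1$, then $S[i..n]_{\overline{c}}$ is non-empty while $L[m+1..m]$ is empty; since the only allowed operations are \texttt{insert} and \texttt{swap}, neither of which can shorten the source string, no correction sequence exists and the distance is $+\infty$. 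The remaining case $\sum_{\alpha=1}^{\D} c_\alpha > n - i + 1$ is inconsistent with the semantics of the counters (some $c_\alpha$ would exceed the number of available $\alpha$ symbols in $S[i..n]$) and is already set to $+\infty$ by the feasibility convention announced in Section~\ref{sec:recursive}, again consistently with the claim.

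There is no real obstacle in this lemma; the only subtlety is ensuring that the edge case $i = n+1$ is covered uniformly. In that case $S[i..n]$ is itself empty, $n-i+1 = 0$, and the equality $\sum_{\alpha=1}^{\D} c_\alpha = 0$ forces all counters to be zero, so the case analysis still recovers both branches of the statement without special treatment.
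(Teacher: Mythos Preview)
Your proposal is correct and follows essentially the same argument as the paper's proof: unfold the definition of $\DIST$, note that $L[m+1..m]$ is empty, and observe that $S[i..n]_{\overline{c}}$ is empty exactly when $\sum_\alpha c_\alpha = n-i+1$, with no transformation possible otherwise because \texttt{insert} and \texttt{swap} cannot shorten a string. Your version is simply more explicit about the length computation and the degenerate cases (the overcount $\sum_\alpha c_\alpha > n-i+1$ and the boundary $i=n+1$), which the paper leaves implicit; one minor remark is that the feasibility convention you invoke is stated in Section~\ref{sec:highlevel}, not Section~\ref{sec:recursive}.
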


\begin{proof}
Note that $\DIST(i,m+1,\overline{c})$ is the minimum number of operations
to transform the string $S[i..n]$ into the empty string $L[m+1..m]$. This
number is null if and only if all the $n-i+1$ symbols of $S[i..n]$ have
been ignored, that is, $c_1+\ldots+c_{\D}=n-i+1$. If not all the symbols
have been ignored, then such a transformation does not exist and
$\DIST(i,m+1,\overline{c})=+\infty$.  \qed\end{proof}

When $S$ has been completely processed, there are only insertions left to
perform: the distance can be computed in constant time, and the list of
corrections in linear time.

\begin{lemma}\label{lem:easy2}
Given $S$ and $L$ over the alphabet $\Sigma$,
for any position $j\in[1..m+1]$ in
$L$, and for any vector of counters
$\overline{c}=(c_1,\ldots,c_{\D})\in\COUNTERS$,
\[
\DIST(n+1,j,\overline{c})= \left\{ 
  \begin{array}{cl}
  m-j+1 & \mbox{ if
$\overline{c}=\overline{0}$  and} \\
  +\infty &  \mbox{ otherwise.}
  \end{array}\right.
\]
\end{lemma}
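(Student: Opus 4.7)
The plan is to unfold the definition of $\DIST(n+1,j,\overline{c})$ and do a simple case analysis on whether the vector of counters is zero. Recall that $\DIST(i,j,\overline{c}) = \delta(S[i..n]_{\overline{c}}, L[j..m])$, and that by the convention from Section~\ref{sec:highlevel}, the substring $S[n+1..n]$ is the empty string.

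First I would handle the case $\overline{c} = \overline{0}$. Then $S[n+1..n]_{\overline{0}}$ is the empty string, and the target $L[j..m]$ has length $m-j+1$. Since the only operations available are \texttt{insertion} and \texttt{swap}, and \texttt{swap} is useless on an empty string (there are no symbols to swap), the only way to convert the empty string into $L[j..m]$ is to insert each of its $m-j+1$ symbols one by one. Thus $\DIST(n+1,j,\overline{0}) = m-j+1$.

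Next I would handle $\overline{c} \neq \overline{0}$. In this case, some $c_{\alpha} > 0$, meaning the state encodes that the first $c_{\alpha}$ occurrences of $\alpha$ in $S[n+1..n]$ have been earmarked for removal (i.e.\ already committed to a \texttt{swap} that brought them earlier). But $S[n+1..n]$ is empty and contains no such occurrences, so the state is infeasible; equivalently, the feasibility condition $count(S,n+1,\alpha)-c_{\alpha}\leq count(L,j,\alpha)$ stated in Section~\ref{sec:highlevel} fails because $count(S,n+1,\alpha)=0$ while $c_{\alpha}>0$ forces the left-hand side to be negative only when interpreted as a removal count on an empty string, yielding no valid transformation. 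Hence $\DIST(n+1,j,\overline{c}) = +\infty$.

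I do not expect any real obstacle here: the lemma is the symmetric counterpart to Lemma~\ref{lem:easy1}, and the proof follows the same template (trivially unfold the definition of the distance on a base case and argue feasibility from the counter vector). The only subtlety worth stating explicitly is why \texttt{swap} cannot help reduce the cost below $m-j+1$ in the first case, which is immediate since swaps act only on symbols already present in the source string.
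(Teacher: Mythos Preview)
Your proof is correct and follows essentially the same approach as the paper's: unfold the definition on the empty source $S[n+1..n]$ and case-split on whether $\overline{c}=\overline{0}$. One small caveat: the ``equivalently'' clause invoking the feasibility inequality $count(S,n+1,\alpha)-c_{\alpha}\le count(L,j,\alpha)$ is off, since that inequality is trivially satisfied (the left side is $-c_{\alpha}\le 0$); the infeasibility in the case $\overline{c}\neq\overline{0}$ comes solely from your first (correct) observation that one cannot remove $c_{\alpha}>0$ occurrences of $\alpha$ from an empty string, so $S[n+1..n]_{\overline{c}}$ is undefined.
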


\begin{proof}
Note that $\DIST(i,m+1,\overline{c})$ is the minimum number of operations
to transform the empty string $S[n+1..n]$ into the string $L[j..m]$. If
$\overline{c}=\overline{0}$, then $\DIST(n+1,j,\overline{c})<+\infty$ and
the transformation consists of only insertions which are $m-j+1$.  If
$\overline{c}\neq\overline{0}$, then $\DIST(n+1,j,\overline{c})=+\infty$.
\qed\end{proof}

\begin{figure}[t]
	\begin{algorithm}{}{$\DIST(i,j,\overline{c}=(c_1,\ldots,c_{\D}))$}
		\qif $\DIST(i,j,\overline{c})=+\infty$ \qthen\label{L1}\\
			\qreturn $+\infty$
		\qelif $i=n+1$ \qthen\\
				(* insertions *)\\
				\qreturn $m-j+1$\label{L2}
		\qelif $j=m+1$ \qthen\label{L3}\\
				(* skip all symbols since they were ignored *)\\
				\qreturn $0$\label{L4}
		\qelse\\
			$\alpha\leftarrow S[i]$, $\beta\leftarrow L[j]$\\
			\qif $c_{\alpha}>0$ \qthen\label{L5}\\
				(* skip $S[i]$, it was ignored *)\\
				\qreturn $\DIST(i+1,j,\overline{c}-\overline{w}_{\alpha})$\label{L6}
			\qelif $\alpha=\beta$ \qthen\label{L7}\\
				(* $S[i]$ and $L[j]$ match *)\\
				\qreturn $\DIST(i+1,j+1,\overline{c})$\label{L8}
			\qelse\\
				$d_{ins}\leftarrow+\infty$, $d_{swaps}\leftarrow+\infty$\label{L9}\\
				\qif $c_{\beta}=0$ \qthen\\
					(* insert a $\beta$ at index $i$ *)\\
					$d_{ins}\leftarrow 1+\DIST(i,j+1,\overline{c})$\label{L-ins}
				\qfi\\			
				$r\leftarrow select(S,rank(S,i,\beta)+c_{\beta}+1,\beta)$\\
				\qif $r\neq null$ \qthen\\
					$\Delta\leftarrow \sum_{\theta=1}^{\D} \min\{c_{\theta},rank(S,r,\theta)-rank(S,i-1,\theta)\}$\\
					(* swaps *)\\
					$d_{swaps}\leftarrow (r-i)-\Delta+\DIST(i,j+1,\overline{c}+\overline{w}_{\beta})$\label{L-swaps}
				\qfi\\
				\qreturn $\min\{d_{ins},d_{swaps}\}$\label{L10} \label{line:min}
			\qfi
	\end{algorithm}
	\caption{{Informal algorithm to compute $\DIST(i,j,\overline{c})$:
		Lemma~\ref{lem:easy1} and
		Lemma~\ref{lem:easy2} guarantee the correctness of lines~\ref{L1} to~\ref{L4};
		Lemma~\ref{lem:core2} guarantees the correctness of lines~\ref{L5} to~\ref{L6};
		Lemma~\ref{lem:core1} guarantees the correctness of lines~\ref{L7} to~\ref{L8}; and
		Lemma~\ref{lem:core3} guarantees the correctness of lines~\ref{L9} to~\ref{L10}.
	}}
	\label{fig:code}
\end{figure}

\begin{LONG}
\subsection{Complete algorithm}
\label{sec:complete-algorithm}

In the following, we describe the formal algorithm to compute $\DIST(i,j,\overline{0})$. 
We consider the worst scenario for the running time of Theorem~\ref{theo:compute1},
where for each symbol $\alpha\in\Sigma$ we have $g_{\alpha}>0$. The other cases in which
$g_{\alpha}>0$ is not satisfied for all $\alpha\in\Sigma$ are easier to implement.
Note that the line~\ref{line:verifyDist} 
of algorithm \texttt{Compute} and line~\ref{line:verifyCounters} of algorithm 
$DIST2$ guarantee that $DIST2(i,j)=DIST(i,j,\overline{c})<+\infty$ in every call of $DIST2$.
Further, the counters $(c_1,c_2,\dots,c_d)$ are global variables to the recursive $DIST2$.  

\begin{figure}
\begin{algorithm}{}{{\tt Compute} $\dist(S,L)$:}
preprocess each of $S$ and $L$ for $rank$ and $select$\\
$(c_1,c_2,\dots,c_d)\leftarrow \overline{0}$\\
\qreturn $ ${\bf if} $\DIST(1,1,\overline{0})<+\infty$ {\bf then} $DIST2(1,1)$ {\bf else} $+\infty$\label{line:verifyDist} 
\end{algorithm}
	\caption{\small{Calling Algorithm to compute $\DIST(i,j,\overline{0})$, filtering degenerated cases before launching the real computation of the distance.}}
	\label{fig:code2}
\end{figure}

\begin{figure}
	\begin{algorithm}{}{$\DIST2(i,j)$:}
		$p\leftarrow $ the first index in $[1..\D]$ so that $c_p=0$\\
		\qfor $\alpha=1$ \qto $\D$ \qdo\\
			\qif $n_{\alpha}\le m_{\alpha}-n_{\alpha}$ \qthen\\
				$x_{\alpha} \leftarrow c_{\alpha}$
			\qelse\\
				$x_{\alpha} \leftarrow rank(L,j-1,\alpha) - rank(S,i-1,\alpha) - c_{\alpha}$
			\qfi
		\qrof\\
		$(r_1,\ldots,r_{\D-1})\leftarrow (x_1,\ldots,x_{p-1},x_{p+1},\ldots,x_{\D})$\\
		$k \leftarrow j-i-(r_1+\dots+r_{\D-1})$\\
		\qif $T[p,i,k,r_1,\ldots,r_{\D-1}]\neq undefined$ \qthen\\
			\qreturn $T[t,i,k,r_1,\ldots,r_{\D-1}]$
		\qelse\\
			\qif $i=n+1$ \qthen\\
				$T[p,i,k,r_1,\ldots,r_{\D-1}]\leftarrow m-j+1$
			\qelif $j=m+1$ \qthen\\
				$T[p,i,k,r_1,\ldots,r_{\D-1}]\leftarrow 0$
			\qelse\\
				$\alpha\leftarrow S[i]$, $\beta\leftarrow L[j]$\\
				\qif $c_{\alpha}>0$ \qthen\\
					$c_{\alpha}\leftarrow c_{\alpha}-1$\\
					$T[p,i,k,r_1,\ldots,r_{\D-1}]\leftarrow\DIST2(i+1,j)$\\
					$c_{\alpha}\leftarrow c_{\alpha}+1$
				\qelif $\alpha=\beta$ \qthen\\
					$T[p,i,k,r_1,\ldots,r_{\D-1}]\leftarrow\DIST2(i+1,j+1)$
				\qelse\\
					$d_{ins}\leftarrow+\infty$, $d_{swaps}\leftarrow+\infty$\\
					\qif $c_{\beta}=0$ \qand $count(S,i,\beta)< count(L,j,\beta)$ \qthen \label{line:verifyCounters}\\
						$d_{ins}\leftarrow 1+DIST2(i,j+1)$
					\qfi\\			
					$r\leftarrow select(S,rank(S,i,\beta)+c_{\beta}+1,\beta)$\\
					\qif $r\neq null$ \qthen\\
						$\Delta\leftarrow \sum_{\theta=1}^{\D} \min\{c_{\theta},rank(S,r,\theta)-rank(S,i-1,\theta)\}$\\
						$c_{\beta}\leftarrow c_{\beta}$+1\\
						$d_{swaps}\leftarrow (r-i)-\Delta+DIST2(i,j+1)$\\
						$c_{\beta}\leftarrow c_{\beta}-1$
					\qfi\\
					$T[p,i,k,r_1,\ldots,r_{\D-1}]\leftarrow\min\{d_{ins},d_{swaps}\}$
				\qfi
			\qfi\\
			\qreturn $T[p,i,k,r_1,\ldots,r_{\D-1}]$
		\qfi
	\end{algorithm}
	\caption{\small{Formal Algorithm to compute $\DIST(i,j,\overline{0})$, using dynamic 
	programming with memorization. Note that the line \ref{line:verifyCounters} 
	of algorithm \texttt{Compute} and line \ref{line:verifyDist} of algorithm 
	$DIST2$ guarantee that $DIST2(i,j)=DIST(i,j,\overline{c})<+\infty$ in every call.}}
	\label{fig:code3}
\end{figure}

\end{LONG}
 
\subsection{Complexity Analysis}
\label{sec:dp}

Combining Lemmas~\ref{lem:core1}~to~\ref{lem:easy2}, the value of
$\DIST(1,1,\overline{0})$ can be computed recursively, as shown in the
algorithm of Figure~\ref{fig:code}. 
We analyze the formal complexity of this algorithm in
Theorem~\ref{theo:compute1}, in the finest model that we can define,
taking into account the relation for each symbol $\alpha\in\Sigma$ between
the number $n_\alpha$ of occurrences of $\alpha$ in $S$ and the number
$m_\alpha$ of occurrences of $\alpha$ in $L$.

\begin{theorem}\label{theo:compute1}
Given two strings $S$ and $L$ over the alphabet $\Sigma$, for each symbol
$\alpha\in\Sigma$, note $n_\alpha$ the number of occurrences of $\alpha$
in $S$ and $m_{\alpha}$ the number of occurrences of $m$ in $L$, their
sums $n=n_1+\dots+n_{\D}$ and $m=m_1+\dots+m_{\D}$, and
$g_{\alpha}=\min\{n_{\alpha},m_{\alpha}-n_{\alpha}\}$ a measure of how far
$n_\alpha$ is from $m_\alpha/2$.
There is an algorithm computing the {\sc Swap-Insert Correction} distance
$\dist(S,L)$ in time within $O(d+m)$ if $S$ and $L$ have no symbol in
common, and otherwise in time within
\[
 O\left(
  d(n+m) 
  +d^2n
  \cdot  \K\cdot
    \prod_{\alpha\in\Sigma_{+}}(g_\alpha+1)
  \right),
\]
where $\Sigma_{+}=\{\alpha\in\Sigma: g_{\alpha}>0\}$ if $g_\alpha=0$ for
any $\alpha\in\Sigma$, and
$\Sigma_{+}=\Sigma\setminus \{\arg\min_{\alpha\in\Sigma}g_{\alpha}\}$
otherwise.
\end{theorem}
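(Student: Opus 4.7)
The algorithm of Figure~\ref{fig:code} is a memoized recursion on triples $(i,j,\overline{c})$ driven by Lemmas~\ref{lem:core1}--\ref{lem:easy2}. Preprocessing $S$ and $L$ for constant-time $rank$ and $select$ queries takes time within $O(d(n+m))$ using one auxiliary array per symbol, and the same pass detects the special case where $S$ and $L$ share no symbol: then either $n=0$ and the distance equals $m$, or some symbol of $S$ is absent from $L$ and the distance is $+\infty$; either way the answer is returned in time within $O(d+m)$. For the general case, each cell is computed in time within $O(d)$, dominated by the sum $\Delta$ of Lemma~\ref{lem:core3}, so the proof reduces to bounding the number of reachable cells $(i,j,\overline{c})$ with $\DIST(i,j,\overline{c})<+\infty$.

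The crucial step is an \emph{insertion identity}: for every such reachable cell and every $\alpha\in\Sigma$,
\[
  b_\alpha ~:=~ rank(L,j-1,\alpha)-rank(S,i-1,\alpha)-c_\alpha
\]
is non-negative and counts the $\alpha$-symbols inserted into $L[1..j-1]$ along any trajectory reaching the cell. I would prove this by induction on the recursion: at $(1,1,\overline{0})$ each $b_\alpha=0$, and direct book-keeping shows that each of the four recurrence rules preserves the formula while incrementing $b_\alpha$ by exactly $1$ whenever an $\alpha$-insertion is performed. Combined with the feasibility bound $b_\alpha\le m_\alpha-n_\alpha$ (explicitly enforced by the pruning clause on line~\ref{L1} of Figure~\ref{fig:code}) and with $c_\alpha\le n_\alpha-rank(S,i-1,\alpha)\le n_\alpha$ (the supply of remaining $\alpha$-symbols in $S[i..n]$), the value of $c_\alpha$ is confined to an interval of width at most $g_\alpha$, hence takes at most $g_\alpha+1$ admissible values per $(i,j)$.

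Next I would change variables by setting $x_\alpha:=c_\alpha$ when $n_\alpha\le m_\alpha-n_\alpha$ and $x_\alpha:=b_\alpha$ otherwise, with complement $y_\alpha$ the remaining member of $\{c_\alpha,b_\alpha\}$; then $x_\alpha\in[0,g_\alpha]$, $y_\alpha\in[0,m_\alpha-g_\alpha]$, and $x_\alpha+y_\alpha=rank(L,j-1,\alpha)-rank(S,i-1,\alpha)$. Summing over $\alpha$ yields $j-i=\sum_\alpha x_\alpha+\sum_\alpha y_\alpha$, so once $i$ and $\overline{x}$ are fixed, $j$ takes at most $\sum_\alpha(m_\alpha-g_\alpha)+1$ values. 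A trajectory argument then shows that every reachable state has at least one zero coordinate in $\overline{c}$ (by induction on the recursion, no rule can promote a state with some $c_\alpha=0$ into one with all $c_\alpha>0$), so I can designate a distinguished index $p$ (one with $g_p=0$ when such exists, otherwise $p=\arg\min_\alpha g_\alpha$) for which $c_p=0$ and hence $x_p$ becomes a function of $(p,i,j)$ alone. Indexing the memoization table by $(p,i,k,(x_\alpha)_{\alpha\ne p})$ with $k$ encoding $j$ yields at most $d\cdot n\cdot\sum_\alpha(m_\alpha-g_\alpha)\cdot\prod_{\alpha\in\Sigma_+}(g_\alpha+1)$ distinct cells; multiplying by the $O(d)$ per-cell cost and adding the $O(d(n+m))$ preprocessing recovers the claimed bound.

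The main obstacle is the two trajectory-independence claims, since several recursive paths may reach the same cell. I must verify that (i) the insertion identity is a function of the state alone, which follows from the inductive check that every recurrence rule preserves the formula; and (ii) no reachable state has $c_\alpha>0$ for all $\alpha$, since the only rule that can increase a component of $\overline{c}$ is the swap branch of Lemma~\ref{lem:core3}, which requires $c_{S[i]}=0$ in the parent cell and therefore cannot eliminate the last zero coordinate. Together these ensure that the pruning clause of Figure~\ref{fig:code} exactly excludes the infeasible states and that the memoization table is not artificially inflated beyond the count above.
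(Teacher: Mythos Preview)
Your proof follows essentially the same route as the paper's: the insertion identity $b_\alpha = rank(L,j{-}1,\alpha) - rank(S,i{-}1,\alpha) - c_\alpha$ is exactly the paper's $t_\alpha$, your change of variables to $x_\alpha \in [0,g_\alpha]$ matches theirs, and the ``at least one zero coordinate in $\overline{c}$'' invariant is argued identically.

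There is one muddled step, however. You describe $p$ as a \emph{globally} distinguished index, ``$p=\arg\min_\alpha g_\alpha$'' when all $g_\alpha>0$, and then claim $c_p=0$; but the invariant only promises that \emph{some} coordinate is zero, not a fixed one, so a swap on symbol $p$ immediately breaks $c_p=0$. The paper --- and your own table index $(p,i,k,(x_\alpha)_{\alpha\ne p})$, which lists $p$ as a coordinate --- actually lets $p$ vary per state as any index with $c_p=0$; the $\arg\min$ enters only afterwards, when upper-bounding $\prod_{\alpha\ne p}(g_\alpha+1)\le\prod_{\alpha\in\Sigma_+}(g_\alpha+1)$ uniformly over the possible values of $p$. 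A related slip: when $g_p=0$ because $m_p=n_p>0$, it is $x_p=b_p$ that vanishes, not $c_p$ (swaps on $p$ are still allowed), so ``$c_p=0$ hence $x_p$ is determined'' has the right conclusion for the wrong reason. With these two local clarifications your argument is complete and coincides with the paper's.
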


\begin{proof}
Observe first that there is a reordering of $\Sigma=[1..\D]$ such that $0<g_1\le
g_2\le\dots\le g_s$ and $g_{s+1}=g_{s+2}=\dots=g_{d}$ for some index $s\in[0..d]$,
and we assume such an ordering from now on.
%
Note also that given any string $X\in\{S,L\}$, a simple 2-dimensional
array using space within $O(\D\cdot |X|)$ can be computed in time within
$O(\D\cdot |X|)$, to support the queries $rank(X,i,\alpha)$ and
$select(X,k,\alpha)$  in constant time for all values of $i\in[1..n]$,
$k\in[1..|X|]$, and $\alpha\in\Sigma$.

The case where the two strings $S$ and $L$ have no symbol in common is
easy: the distance is then $+\infty$.  The algorithm detects this case by
testing if $g_\alpha=0$ for all $\alpha\in\Sigma$, in time within
$O(d+m)$.

Consider the algorithm of Figure~\ref{fig:code}, and let $i\in[1..n]$,
$j\in[1..m]$, and $\overline{c}=(c_1,\ldots,c_{\D})$ be parameters such
that $\DIST(i,j,\overline{c})<+\infty$.

At least one of the $c_1,\ldots,c_{\D}$ is equal to zero: in the first
entry $\DIST(1,1,\overline{0})$ all the counters $c_1,c_2,\ldots,c_{\D}$
are equal to zero, and any counter is incremented only at
line~\ref{L-swaps}, in which another counter must be equal to zero because
of the lines~\ref{L5} and~\ref{L7}.

The number of \texttt{insertions} counted in line~\ref{L-ins}, in previous
calls to the function $\DIST$ in the recursion path from
$\DIST(1,1,\overline{0})$ to $\DIST(i,j,\overline{c})$, is equal to
$j-i-(c_1+\dots+c_{\D})$.
Let $t_{\alpha}$ denote the number of such insertions for the symbol
$\alpha\in\Sigma$. Then, we have 
\[
	j~=~i+(c_1+\dots+c_{\D})+(t_1+\dots+t_{\D}),
\]
and for all $\alpha\in\Sigma$, $c_{\alpha}\leq n_{\alpha}$, $t_{\alpha}\leq
m_{\alpha}-n_{\alpha}$, and
\[
	c_{\alpha}+t_{\alpha} ~=~ rank(L,j-1,\alpha)-rank(S,i-1,\alpha).
\]
Using the above observations, we encode all entries
$\DIST(i,j,\overline{c})$, for $i,j$ and $\overline{c}$ such
that $\DIST(i,j,\overline{c})<+\infty$, into the following table $T$ of
$s+2\le \D+2$ dimensions.
If we have $s=d$, then
\[
	T[p,i,k,r_1,\ldots,r_{\D-1}]~=~\DIST(i,j,\overline{c}=(c_1,\ldots,c_{\D})),
\]
where
\begin{align*}
	c_p & ~=~ 0,\\
	(r_1,\ldots,r_{\D-1}) & ~=~ (x_1,\ldots,x_{p-1},x_{p+1},\ldots,x_{\D})\\
	x_{\alpha} & ~=~ \left\{ 
		\begin{array}{ll}
			c_{\alpha} & \text{if }~ n_{\alpha}\le m_{\alpha}-n_{\alpha}\\
			t_{\alpha} & \text{if }~ m_{\alpha}-n_{\alpha} < n_{\alpha}
		\end{array}
	\right.~\text{for every } \alpha\in\Sigma, \text{ and} \\
	k & ~=~ (c_1+\dots+c_{\D})+(t_1+\dots+t_{\D})-(r_1+\dots+r_{\D-1}).
\end{align*}
Furthermore, given any combination of values $i,j,c_1,\ldots,c_{\D}$ we can switch
to the values $p,i,k,r_1,\ldots,r_{\D-1}$, and vice versa, in time within $O(d)$.
Otherwise, if $s<d$, then
\[
	T[i,k,r_1,\ldots,r_{s}]~=~\DIST(i,j,\overline{c}=(c_1,\ldots,c_{\D})),
\]
where $(r_1,\ldots,r_{s}) = (x_1,\ldots,x_{s})$. Again, 
given the values $i,j,c_1,\ldots,c_{\D}$ we can switch
to the values $i,k,r_1,\ldots,r_{s}$, and vice versa, in $O(d)$ time.

Since $p\in [1..\D]$, $i\in[1..n+1]$, $k\in [0..\K]$, 
and $r_{\alpha}\in
[0..g_{\alpha}]$ for every $\alpha$, the table $T$ can be as large as
$\D\times (n+1)\times (1+\K)\times (g_2+1)\times\dots\times(g_{\D}+1)$ if $s=d$, and
as large as $(n+1)\times  (1+\K)\times (g_1+1)\times\dots\times(g_{s}+1)$ if $0<s<d$. 
For $s=0$, no table is needed.
\begin{INUTILE}
Using a modified version of the algorithm of Figure~\ref{fig:code}, based
on memoization on the table $T$, the goal entry $\DIST(1,1,\overline{0})$
can be computed recursively, where each entry $\DIST(i,j,\overline{c})$ is
computed in (amortized) $O(\D)$ time.
\end{INUTILE}
The running time of this new algorithm includes the $O(d(n+m))=O(\D m)$ time for
processing each of $S$ and $L$ for $rank$ and $select$, and the time to
compute $DIST(1,1,\overline{0})$ which is within $O(\D)$ times $n+m$ plus the number
of cells of the table $T$. If $s=d$, the time to compute $DIST(1,1,\overline{0})$ is
within
\[
	O\left(d(n+m)+{\D}^2 n \cdot \K \cdot(g_2+1)\cdot\dots\cdot(g_{\D}+1)\right).
\]
Otherwise, if $0\le s<d$, the time to compute $DIST(1,1,\overline{0})$ is within
\[
	O\left(d(n+m)+{\D} n \cdot \K \cdot(g_1+1)\cdot\dots\cdot(g_{s}+1)\right).
\]
The result follows by noting that: if $s=d$, then $\Sigma_+=\{2,\ldots,d\}$. Otherwise,
if $s<d$, then $\Sigma_+=\{1,\ldots,s\}$.
\qed\end{proof}

The result above, about the complexity in the worst case over instances
with $d,n_1,\ldots,n_d$, $m_1,\ldots,m_d$ fixed, implies results in less
precise models, such as in the worst case over instances for $d,n,m$
fixed:

\begin{corollary}\label{theo:compute2}
Given two strings $S$ and $L$ over the alphabet $\Sigma$, of respective
sizes $n$ and $m$, the algorithm analyzed in Theorem~\ref{theo:compute1}
computes the {\sc Swap-Insert Correction} distance $\dist(S,L)$ in time
within
\[
	O\left(\D(n+m)+ {\D}^2n(m-n)\left(\frac{n}{\D-1}+1\right)^{\D-1}\right),
\]
which is within $O\left(n+m+ n^d(m-n)\right)$ for alphabets of fixed size $d$; and within
\[
	O\left(\D(n+m)+ {\D}^2n^2\left(\frac{m-n}{\D-1}+1\right)^{\D-1}\right),
\]
which is within $O\left(n+m+ n^2(m-n)^{d-1}\right)$ for alphabets of fixed
size $d$.
\end{corollary}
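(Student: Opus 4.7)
The plan is to specialize Theorem~\ref{theo:compute1} to the worst case over instances with $d$, $n$, and $m$ fixed by absorbing the vector $(g_1,\dots,g_d)$ into the coarser parameters $n$ and $m-n$ via the AM-GM inequality, and then expanding the powers for fixed $d$.

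First, I would observe that $|\Sigma_{+}|\leq d-1$ in every case by the definition of $\Sigma_{+}$ given in Theorem~\ref{theo:compute1}. Setting $\sigma=\sum_{\alpha\in\Sigma}g_{\alpha}$ and applying AM-GM to the at most $d-1$ factors $(g_{\alpha}+1)$ yields
\[
\prod_{\alpha\in\Sigma_{+}}(g_{\alpha}+1)~\leq~\left(\frac{\sigma}{d-1}+1\right)^{d-1}.
\]
The identity $\K=m-\sigma$ (immediate from $\sum_{\alpha}m_{\alpha}=m$) then reduces the product $\K\cdot\prod_{\Sigma_{+}}(g_{\alpha}+1)$ to the single-variable envelope $(m-\sigma)\bigl(\sigma/(d-1)+1\bigr)^{d-1}$.

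Next, since $g_{\alpha}=\min\{n_{\alpha},m_{\alpha}-n_{\alpha}\}$, summing yields both $\sigma\leq n$ and $\sigma\leq m-n$, so the feasible range of $\sigma$ is $[0,\min\{n,m-n\}]$. A direct calculus check shows the envelope is increasing up to the interior critical point $\sigma^{*}=(m-1)(d-1)/d$ and decreasing afterwards; since $\min\{n,m-n\}\leq m/2$ differs from $\sigma^{*}$ by at most $O(1)$ for $d\geq 2$, the envelope is, up to a constant factor, maximized at the right endpoint of the feasible range. Substituting $\sigma=n$ then gives the factor $(m-n)(n/(d-1)+1)^{d-1}$, and substituting $\sigma=m-n$ gives the symmetric $n((m-n)/(d-1)+1)^{d-1}$. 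Multiplying by the $d^{2}n$ prefactor and adding the preprocessing term $O(d(n+m))$ inherited from Theorem~\ref{theo:compute1} yields the two stated intermediate bounds.

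Finally, the simplifications $O(n+m+n^{d}(m-n))$ and $O(n+m+n^{2}(m-n)^{d-1})$ for fixed $d$ follow routinely by expanding $(x/(d-1)+1)^{d-1}\in O(x^{d-1})$ for $x\in\{n,m-n\}$ and absorbing the $d$-dependent constants into the $O$-notation. The main subtlety I expect is the calculus step: one must verify that substituting $\sigma$ by its feasible boundary value really does upper-bound the envelope on the whole feasible range, which requires comparing $\min\{n,m-n\}$ to the interior critical point $\sigma^{*}$ and swallowing a constant factor in the regime where the boundary sits just past $\sigma^{*}$.
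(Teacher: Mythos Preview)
Your route—apply AM-GM first, then do single-variable calculus on the envelope $f(\sigma)=(m-\sigma)\bigl(\sigma/(d-1)+1\bigr)^{d-1}$—is genuinely different from the paper's, and it almost works, but the substitution step has a real gap.

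The monotonicity argument you sketch shows at most that the envelope on the feasible range $[0,\min\{n,m-n\}]$ is bounded (up to a constant) by $f(\min\{n,m-n\})$. That yields \emph{one} of the two target bounds, namely the one obtained by substituting $\sigma=\min\{n,m-n\}$. The other substitution, $\sigma=\max\{n,m-n\}$, is not justified: $\max\{n,m-n\}$ can lie far past the critical point $\sigma^{*}=(d-1)(m-1)/d$ (e.g.\ $d=2$, $n=m-1$, where $\max\{n,m-n\}=m-1$ but $\sigma^{*}=(m-1)/2$), and there $f$ is decreasing, so monotonicity does not give $f(\sigma)\le f(\max\{n,m-n\})$. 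Concretely, for $d=2$ and $n=m-1$ one has $f(1)=2(m-1)$ while $f(n)=f(m-1)=m$, so ``substitute $\sigma=n$'' overshoots by a factor approaching~$2$. What is missing is the separate estimate $f(\min\{n,m-n\})\le C\cdot f(\max\{n,m-n\})$; this is true but requires its own argument, not the envelope's unimodality. (Your aside that $m/2$ and $\sigma^{*}$ differ by $O(1)$ is also false for $d\ge 3$, where $\sigma^{*}\approx(d-1)m/d$; fortunately in that regime the feasible range lies entirely left of $\sigma^{*}$, so this particular slip is harmless.)

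The paper sidesteps all calculus by working coordinatewise \emph{before} AM-GM. It proves the elementary exchange inequality $(a+y)(x+1)\le(a+x)(y+1)$ for $a\ge 1$, $x\le y$, and applies it once per symbol $\beta$ with $g_{\beta}=m_{\beta}-n_{\beta}<n_{\beta}$, simultaneously swapping $g_{\beta}$ for $n_{\beta}$ in both the summand $m_{\beta}-g_{\beta}$ of $\K$ and the factor $(g_{\beta}+1)$ of the product. Iterating converts $\K\cdot\prod_{\alpha\in\Sigma_{+}}(g_{\alpha}+1)$ directly into $(m-n)\prod_{\alpha\in\Sigma_{+}}(n_{\alpha}+1)$ for the first bound (and symmetrically into $n\prod_{\alpha\in\Sigma_{+}}(m_{\alpha}-n_{\alpha}+1)$ for the second); only then is AM-GM invoked. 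This exchange lemma achieves in one elementary stroke exactly what your envelope argument would still owe via the missing comparison $f(\min\{n,m-n\})\le C\,f(\max\{n,m-n\})$.
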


\begin{LONG}
\begin{proof}
We use the following claim: If $a\ge 1$ and $x\le y$, then $(a+y)(x+1)\le (a+x)(y+1)$.
It can be proved as follows:
\begin{eqnarray*}
	(a-1)x & \le & (a-1)y\\
	ax + y & \le & ay + x\\
	ax + y + a + xy & \le & ay + x + a + xy\\
	(a+y)(x+1) & \le & (a+x)(y+1).
\end{eqnarray*}
Let $\Sigma_{+}\subset \Sigma$ be as defined in Theorem~\ref{theo:compute1},
and consider the worst scenario for the running time, that is, let us consider
w.l.o.g.\ that $\Sigma_{+}=[2..d]$.
Let $\beta\in\Sigma_{+}$ be a symbol such that $m_{\beta}-n_{\beta}<n_{\beta}$,
and define $a$ and $b$ such that:
\begin{eqnarray*}
   a  & = & ~\K-(m_{\beta}- g_{\beta}) ~=~ \K-n_{\beta}\\
      & = & \sum_{\alpha\in\Sigma\setminus\{\beta\}}(m_{\alpha}-g_{\alpha}) ~\ge~ 1,
\end{eqnarray*}
and
\[
	b ~=~ \prod_{\alpha\in\Sigma_{+}\setminus\{\beta\}}(g_{\alpha}+1). 
\]
Note that:
\begin{eqnarray*}
	\K\cdot\prod_{\alpha\in\Sigma_{+}}(g_{\alpha}+1) & = & (a + n_{\beta})\cdot b \cdot (m_{\beta}-n_{\beta} + 1)\\
		& \le & (a + m_{\beta}-n_{\beta})\cdot b \cdot (n_{\beta} + 1),
\end{eqnarray*}
which immediately implies
\[
	\K\cdot\prod_{\alpha\in\Sigma_{+}}(g_{\alpha}+1) ~\le~ (m-n)\prod_{\alpha\in\Sigma_{+}}(n_{\alpha}+1).
\]
Then, 
\begin{eqnarray*}
	O\left({\D}^2n\cdot \K\cdot\prod_{\alpha\in\Sigma_{+}}(g_{\alpha}+1)\right) 
	& \subseteq & O\left({\D}^2n(m-n)\cdot(n_2+1)\cdot\dots\cdot(n_{\D}+1)\right)\\
	& \subseteq & O\left({\D}^2 n(m-n)\cdot \left(\frac{n_2+\dots+n_{\D}}{\D-1}+1\right)^{\D-1}\right)\\
	& \subseteq & O\left({\D}^2 n(m-n)\cdot \left(\frac{n}{\D-1}+1\right)^{\D-1}\right).
\end{eqnarray*}
Using similar arguments, we can prove that
\[
	\K\cdot\prod_{\alpha\in\Sigma_{+}}(g_{\alpha}+1) ~\le~ n\prod_{\alpha\in\Sigma_{+}}(m_{\alpha}-n_{\alpha}+1),
\]
which implies the second part of the result.
\qed\end{proof}
\end{LONG}

\section{Discussion}\label{sec:conclusions}

\begin{LONG}
In 2014, Meister~\cite{2015-TCS-UsingSwapsAndDeletesToMakeStringsMatch-Meister} described an algorithm computing
the \textsc{Swap-Insert Correction} distance from a string
$S\in[1..\alphabetSize]^n$ to another string
$L\in[1..\alphabetSize]^m$ on any fixed alphabet size
$\alphabetSize\ge 2$, in time polynomial in $n$ and $m$.
\end{LONG}
\begin{LONG}
The algorithm that we described takes advantage of instances where for all
symbols $\alpha\in\Sigma$ the number $n_\alpha$ of occurrences of $\alpha$
in $S$ is either close to zero (i.e. most $\alpha$ symbols from $L$ are
placed in $S$ through \texttt{insertions}) or close to the number
$m_\alpha$ of occurrences of $\alpha$ in $L$ (i.e. most $\alpha$ symbols
from $L$ are matched to symbols in $S$ through \texttt{swaps}), while
still running in time within
$O(m+ \min\left\{n^d(m-n),n^2(m-n)^{d-1}\right\})$ when the alphabet size
$\D$ is a constant, in the worst case over instances composed of strings
of sizes $n$ and $m$.
\end{LONG}
The exact running time of our algorithm is within
\[
 O\left(
  d(n+m) 
  +d^2n
  \cdot  \K\cdot
    \prod_{\alpha\in\Sigma_{+}}(g_\alpha+1)
  \right),
\]
where 
$n_\alpha$ and $m_\alpha$ are the respective number of
occurrences of symbol $\alpha\in[1..d]$ in $S$ and $L$ respectively;
where the vector formed by the values
$g_\alpha=\min\{n_\alpha,m_\alpha-n_\alpha\}$ measures the distance
between $(n_1,\ldots,n_\sigma)$ and $(m_1,\ldots,m_\sigma)$; and 
where $\Sigma_{+}=\{\alpha\in\Sigma: g_{\alpha}>0\}$ if $g_\alpha=0$ for
any $\alpha\in\Sigma$, and
$\Sigma_{+}=\Sigma\setminus \{\arg\min_{\alpha\in\Sigma}g_{\alpha}\}$
otherwise.

Summarizing the disequilibrium between the frequency distributions of the
symbols in the two strings via the measure
$g=\max_{\alpha\in\Sigma} g_\alpha\leq n$, this yields a complexity within
$O(d^2 n m g^{d-1})$, which is polynomial in $n$ and $m$, and exponential
only in $d$ of base $g$.
Since this disequilibrium $g$ is smaller than the length $n$ of the
smallest string $S$, this implies a worst case complexity within
$O(d^2 m n^d)$ over instances formed by strings of lengths $n$ and $m$
over an alphabet of size $d$, a result matching the state of the
art~\cite{2015-TCS-UsingSwapsAndDeletesToMakeStringsMatch-Meister} for this problem.

\subsection{Implicit Results}

The result from Theorem~\ref{theo:compute1} implies the following
additional results:

\begin{description}

\item[Weighted Operators:] Wagner and Fisher \cite{wagner1974string}
  considered variants where the cost $c_{ins}$ of an \texttt{insertion}
  and the cost $c_{swap}$ of an \texttt{swap} are distinct. In the
  \textsc{Swap-Insert Correction} problem, there are
  always $n-m$ \texttt{insertions}, and always $\dist(S,L)-n+m$
  \texttt{swaps}, which implies the optimality of the algorithm we
  described in such variants.
 
  \begin{LONG}
\item[Computing the Sequence of Corrections:] Since any correct
  algorithm must verify the correctness of its output, given a set
  $\cal C$ of correction operators, any correct algorithm computing
  the \textsc{String-to-String Correction Distance} when limited to
  the operators in $\cal C$ implies an algorithm computing a minimal
  sequence of corrections under the same constraints within the same
  asymptotic running time.
  \end{LONG}

\item[Implied improvements when only swaps are needed:] Abu-Khzam et
  al.~\cite{AbuKhzam201141} mention an algorithm computing the
  \textsc{Swap String-to-String Correction} distance (i.e. only
  \texttt{swaps} are allowed) in time within $O(n^2)$.  This is a
  particular case of the \textsc{Swap-Insert Correction}
  distance, which happens exactly when the two strings are of the same
  size $n=m$ (and no \texttt{insertion} is neither required nor allowed).
  In this particular case, our algorithm yields a solution running in time
  within $O(\D m)$, hence improving on Abu-Khzam et al.'s
  solution~\cite{AbuKhzam201141}.

\item[Effective Alphabet:] Let $\alphabetSize'$ be the effective
  alphabet of the instance, i.e. the number of symbols $\alpha$ of
  $\Sigma=[1..\alphabetSize]$ such that the number of occurrences of
  $\alpha$ in $S$ is a constant fraction of the number of occurrences
  of $\alpha$ in $L$ (i.e. $n_\alpha\in \Theta(m_\alpha)$).  Our
  result implies that the real difficulty is $\alphabetSize'$ rather
  than $\alphabetSize$, i.e. that even for a large alphabet size
  $\alphabetSize$ the distance can still be computed in reasonable
  time if $\alphabetSize'$ is finite.

\end{description}

\subsection{Perspectives}

Those results suggest various directions for future research:
\begin{description}

\item[Further improvements of the algorithm:] our algorithm can be
  improved further using a lazy evaluation of the $\min$ operator on
  line \ref{line:min}, so that the computation in the second branch of
  the execution stops any time the computed distance becomes larger
  than the distance computed in the first branch. This would save time
  in practice, but it would not improve the worst-case complexity in
  our analysis, in which both branches are fully explored: one would
  require a finer measure of difficulty to express how such
  a modification could improve the complexity of the algorithm

\item[Further improvements of the analysis:] The complexity of
  Abu-Khzam et al.'s algorithm~\cite{AbuKhzam201141}, sensitive to the
  distance from $S$ to $L$, is an orthogonal result to ours. An
  algorithm simulating both their algorithm and ours in parallel
  yields a solution adaptive to both measures, but an algorithm using
  both techniques in synergy would outperform both on some instances,
  while never performing worse on other instances.

\item[Adaptivity for other existing distances:] Can other
  \textsc{String-to-String Correction} distances be computed faster
  when the number of occurrences of symbols in both strings are
  similar for most symbols? Edit distances such as when only
  insertions or only deletions are allowed are linear anyway, but more
  complex combinations require further studies.

\end{description}


\section*{\ackname}
The authors would like to thank the anonymous referees of SPIRE 2015 for insightful comments.

\bibliography{biblio}

\begin{thebibliography}{1}

\bibitem{AbuKhzam201141}
F.~N. Abu-Khzam, H.~Fernau, M.~A. Langston, S.~Lee-Cultura, and U.~Stege.
\newblock Charge and reduce: A fixed-parameter algorithm for {String-to-String
  Correction}.
\newblock {\em Discrete Optimization (DO)}, 8(1):41 -- 49, 2011.

\bibitem{2015-SPIRE-AdaptiveComputationOfTheSwapInsertEdutionDistance-BarbayPerez}
J.~Barbay and P.~P{\'e}rez-Lantero.
\newblock Adaptive computation of the {Swap-Insert Edition Distance}.
\newblock In {\em Proceedings of the Symposium on String Processing and
  Information Retrieval (SPIRE)}, 2015.
\newblock (to appear).

\bibitem{Cormen2009}
T.~H. Cormen, C.~E. Leiserson, R.~L. Rivest, and C.~Stein.
\newblock {\em Introduction to Algorithms, Third Edition}.
\newblock The MIT Press, 3rd edition, 2009.

\bibitem{2015-TCS-UsingSwapsAndDeletesToMakeStringsMatch-Meister}
Daniel Meister.
\newblock Using swaps and deletes to make strings match.
\newblock {\em Theoretical Computer Science (TCS)}, 562(0):606 -- 620, 2015.

\bibitem{spreen2013}
T.~D. Spreen.
\newblock The {Binary String-to-String Correction Problem}.
\newblock Master's thesis, University of Victoria, Canada, 2013.

\bibitem{wagner1975complexity}
R.~A. Wagner.
\newblock On the complexity of the extended {String-to-String Correction
  Problem}.
\newblock In {\em Proceedings of the seventh annual ACM {Symposium on Theory Of
  Computing} ({STOC})}, pages 218--223. ACM, 1975.

\bibitem{wagner1974string}
R.~A. Wagner and M.~J. Fischer.
\newblock The {String-to-String Correction Problem}.
\newblock {\em Journal of the ACM (JACM)}, 21(1):168--173, 1974.

\bibitem{wagner1975extension}
R.~A. Wagner and R.~Lowrance.
\newblock An extension of the {String-to-String Correction Problem}.
\newblock {\em Journal of the ACM (JACM)}, 22(2):177--183, 1975.

\bibitem{watt2013}
N.~Watt.
\newblock {String to String Correction} kernelization.
\newblock Master's thesis, University of Victoria, Canada, 2013.

\end{thebibliography}

\end{document}